\documentclass[conference]{IEEEtran}

\usepackage{amsthm}
\usepackage{amsfonts}
\usepackage{amssymb}
\usepackage{latexsym}
\usepackage{cite}
\usepackage[cmex10]{amsmath}
\usepackage{algorithmic}
\usepackage{url}
\usepackage[tight,footnotesize]{subfigure}

\usepackage[pdftex]{graphicx}

\hyphenation{op-tical net-works semi-conduc-tor}

\usepackage{array}
\newcolumntype{L}[1]{>{\raggedright\let\newline\\\arraybackslash\hspace{0pt}}m{#1}}

\newcommand{\rrn}{r_r}
\newcommand{\erfc}{\mathrm{erfc}\,}
\newcommand{\rxi}[1]{\mathrm{Rx}#1}

\usepackage{amsthm}

\newtheorem{theorem}{Theorem}

\newtheorem{lemma}[theorem]{Lemma}




\begin{document}

\title{Detection Algorithms for Molecular MIMO}

\author{\IEEEauthorblockN{Bon~Hong~Koo, H. Birkan~Yilmaz, Chan-Byoung Chae}
\IEEEauthorblockA{School of Integrated Technology\\
Yonsei Institute of Convergence Technology\\
Yonsei University, Korea\\
Email: \{harpeng7675,birkan.yilmaz,cbchae\}@yonsei.ac.kr}
\and
\IEEEauthorblockN{Andrew Eckford}
\IEEEauthorblockA{Dept. of Electrical Engineering and Computer Science\\
York University, Toronto, Canada\\
Email: aeckford@yorku.ca}
}


\maketitle

\begin{abstract}
In this paper, we propose a novel design for molecular communication in which both the transmitter and the receiver have, in a 3-dimensional environment, multiple bulges (in RF communication this corresponds to antenna). The proposed system consists of a fluid medium, information molecules, a transmitter, and a receiver. We simulate the system with a one-shot signal to obtain the channel's finite impulse response. We then incorporate this result within our mathematical analysis to determine interference. Molecular communication has a great need for low complexity, hence, the receiver may have incomplete information regarding the system and the channel state. Thus, for the cases of limited information set at the receiver, we propose three detection algorithms, namely adaptive thresholding, practical zero forcing, and Genie-aided zero forcing.
\end{abstract}
\begin{keywords} 
Molecular communication via diffusion, interference, Brownian motion, 3-D simulation, symbol detection algorithm.
\end{keywords}

\IEEEpeerreviewmaketitle

\section{Introduction} 
If operating at the nano-scale is to make an impact at the macro scale, there must be high cooperation among multiple devices~\cite{nakano2013molecularC,kuran2010energyMF}. Therefore, nanonetworking emerges as a new paradigm and molecular communication, as an interdisciplinary branch, draws the attention. The literature has proposed  various molecular communication systems, such as molecular communication via diffusion (MCvD), calcium signaling, microtubules, DNA micro-arrays, pheromone signaling, and bacterium-based communication~\cite{nakano2013molecularC, kuran2012calciumSO, kuran2010energyMF, lio2012opportunisticRT, far12NanoBio}. In an MCvD, a number of micro- and nano-machines reside in a viscous environment and communicate through molecules that are emitted into the medium. Following the physical characteristics of the diffusion channel, these molecules propagate through the environment. Some of these molecules arrive at the receiver (i.e., hit the receiver) and form chemical bonds with the receptors on the surface of the receiver. The properties of these received molecules (e.g., concentration and/or type) constitute the received signal~\cite{kuran2010energyMF,kim2013novelMT}. 

To increase MCvD performance, many enhancements are proposed in the literature such as incorporating inter symbol interference (ISI) mitigation techniques~\cite{yilmaz2014simulationSO_SIMPAT, NKim_TNB}, using multiple molecule types~\cite{kuran2012interferenceEO}, using multiple input multiple output (MIMO) techniques~\cite{meng2012mimoCB}. The authors in~\cite{meng2012mimoCB}, however, mainly focused on multiuser interference and paid scant attention to the ISI. In molecular communication, ISI is the main source of communication impairment and must be analyzed precisely. Therefore, this paper focuses on introducing and enhancing MIMO analysis.

We propose a molecular MIMO system and, considering the channel and interference models, introduce new detection algorithms. The interference in the system is caused by the ISI and inter link interference (ILI). This is where ISI originates from the previous emissions of the corresponding antenna and ILI arises from the other antenna. To the best of our knowledge, this is the first work that considers MIMO for molecular communication while taking into account the ISI and ILI. We model the channel's impulse response by modifying the single input single output (SISO) channel model in a 3-dimensional (3-D) environment~\cite{yilmaz2014_3dChannelCF}. Consequently, we analyze the system performance via the MIMO simulator developed through MATLAB.

The rest of this paper is organized as follows. In Section~\ref{model}, we describe the system model including topology, propagation, and communication models. In Section~\ref{fit_n_algorithm}, we detail the channel estimation method and the proposed detection algorithms. In Section~\ref{results}, the results and discussions are presented. Finally, the conclusion is given in Section~\ref{conclusion}.

\section {System Model}
\label{model}
Consider a molecular communication system in a 3-D environment with two point sources and two spherical receiver antennas.\footnote{Throughout this paper, we use the terms bulge and antenna interchangeably.} The transmitter releases, without any directional preferences, a certain number of messenger molecules at once. Without colliding into one another or undergoing any chemical reactions, the emitted molecules travel along the fluid medium via diffusion. When a molecule hits the boundary of a spherical antenna, it is immediately absorbed by the receiving antenna and removed from the medium. We assume that the transmitter-receiver pair is synchronized and the receiver can count the number of received molecules during a symbol duration.

During communication, the transmit antennas convey independent messages to their corresponding receiver bulges. Each link of transceivers uses the same type of molecule and the molecules from the other transmitter cause ILI. 

\subsection{Topology and Propagation Model}
\label{topology}	
\begin{figure}[t]
\centering
\includegraphics[width=0.90\columnwidth,keepaspectratio]
{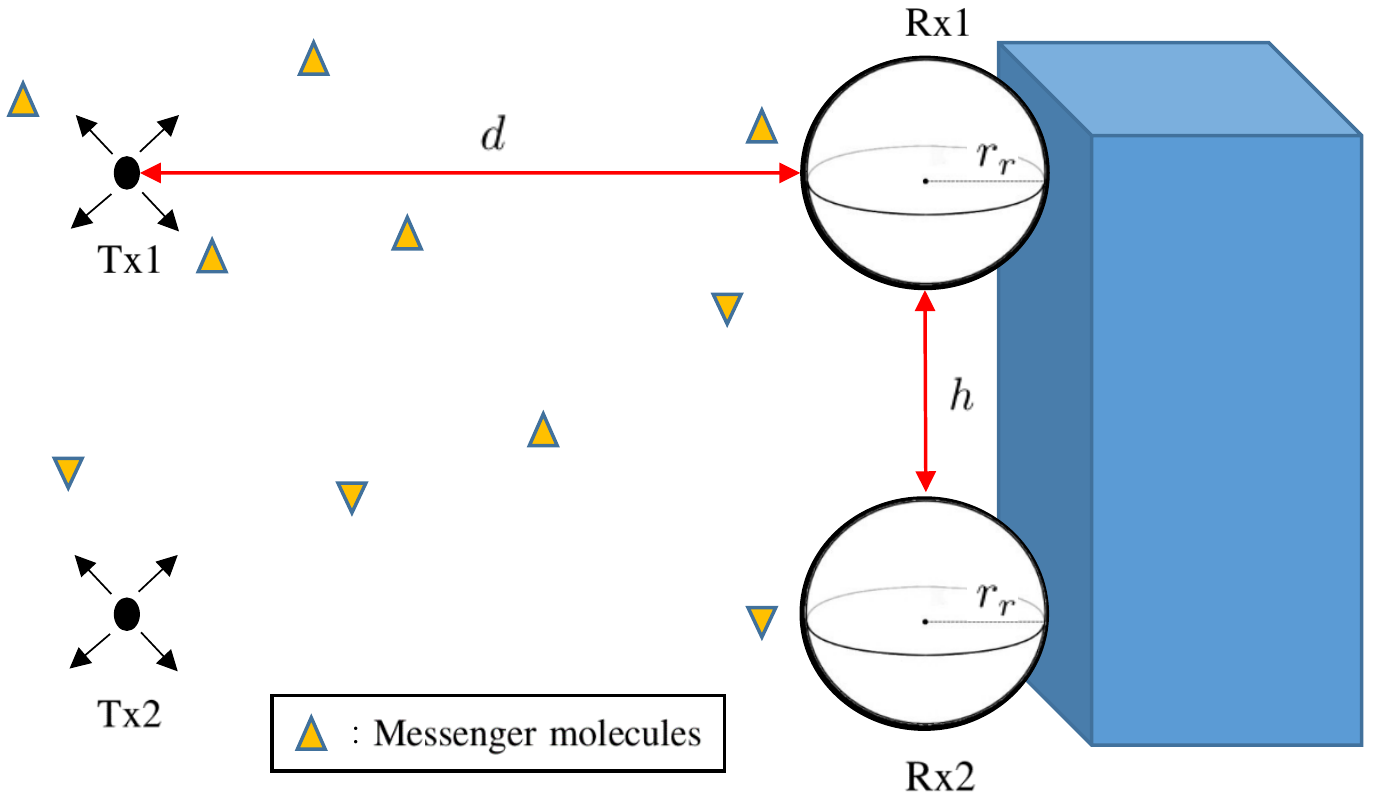}
\caption{Topological model of molecular $2\times2$ MIMO system.}
\label{Fig:Topology_model}
\end{figure}

As shown in Fig.~\ref{Fig:Topology_model}, there are two transmitter antennas, Tx1 and Tx2,  placed $d$ distance apart from the corresponding receiver bulges, Rx1 and Rx2. We have two receiver bulges with the same radius $\rrn$, which are placed $h$ distance apart. The centers of the receiver bulges, Tx1,  and Tx2 lie on the corners of a rectangular grid. The receiver bulges are attached to the receiver body and we assume that just the antennas are capable of receiving molecules, which is a realistic assumption considering many examples found in the nature. For example, epithelial cells, neurons, and migrating cells are examples of polarized cells that have heterogeneous receptor
deployments, which is an adaptation to the environment and the signaling mechanism. 

When a molecule is released from a point source, its movement in a fluid is governed by diffusion and drift. Drift is applicable if there is a flow and we leave the drift case to future work. The dynamics of diffusion can be described by Brownian motion. Derived in~\cite{yilmaz2014_3dChannelCF} is the fraction of molecules that are absorbed by a single spherical receiver antenna in a 3-D environment until time $t$ with given parameters
\begin{equation}
\label{eqn_first_hitting_3d}
F(t|\rrn,\, d,\, D)= \frac{\rrn}{\rrn+d}\, \erfc\left( \frac{d}{\sqrt{4Dt}} \right)
\end{equation}
where $D$ denotes the diffusion coefficient. In our setup, however, there are two receiver bulges and we cannot directly use (\ref{eqn_first_hitting_3d}). Therefore, we simulate the Brownian motion for released particles within the given MIMO setup by using 
\begin{align}
\begin{split}
(x_{t}, y_{t}, z_{t}) &= (x_{t-\Delta t}, y_{t-\Delta t}, z_{t-\Delta t}) + (\Delta x, \Delta y, \Delta z)\\
\Delta x & \sim \mathcal{N}(0, 2D\Delta t) \\
\Delta y & \sim \mathcal{N}(0, 2D\Delta t) \\
\Delta z & \sim \mathcal{N}(0, 2D\Delta t) 
\end{split}
\label{eqn_propagation_model}
\end{align}
where $x_{t}$, $y_{t}$, $z_{t}$, and $\mathcal{N}(\mu, \sigma^2)$ are the particles' positions at each dimension at time $t$, and the normal distribution with mean $\mu$ and variance $\sigma^2$. The Brownian motion simulator for the MIMO setup is a modified version of the  simulator that is developed for a SISO case in~\cite{yilmaz2014simulationSO_SIMPAT}. During its trip, if a molecule hits one of the spherical receiver bulges, then it is absorbed and removed from the environment. Therefore, a molecule can contribute to the signal just once and we have four different $F(t|r_r, d, D)$ values depending on the molecule's emission source and hitting bulge. We use the simulation data to formulate $F(t|r_r, d, D)$ for the 2$\times$2 molecular MIMO setup and utilize it for the analysis.

 \subsection{Communication Model}
\label{communication}	

To encode information, we use the binary concentration shift keying (BCSK) modulation technique. We denote $Q_1$ and $Q_0$ as the number of molecules released to send bit-1 and ${\mbox{bit-0}}$, respectively. The transmitter has independent sets of bits $x_1$ and $x_2$ for their own messages. During the $n^{\mathrm{th}}$ symbol, Tx1 and Tx2 send $x_1[n]$ and $x_2[n]$ each by releasing ${Q}_{{x}_{1}[n]}$ and ${Q}_{{x}_{2}[n]}$ molecules at the start of the symbol time, and wait until the next emission time. The duration between consecutive symbols is called the symbol duration and is denoted by ${t}_{s}$. The number of molecules absorbed at the receiver stochastically follows a binomial distribution with a hitting probability, which is related to $t_s$, $d$, $r_r$, and $D$~\cite{kuran2010energyMF}. We define $F_{ij}(t_1, t_2)$ as the probability of hitting to Rx$i$ between $t_1$ and $t_2$ for a molecule that is released from a Tx$j$. So we can define the random variable $\mathcal{S}_{ij}(t_1,t_2)$ as follows:
\begin{equation}
\label{eq_binom_rv_sij}
\mathcal{S}_{ij}(t_1,t_2) \triangleq \mathcal{B}\left(1,F_{ij}(t_1,t_2)\right),
\end{equation}
where $\mathcal{B}(n,p)$ denotes the binomial random variable with $n$ trials and success probability $p$. Binomial random variable in \eqref{eq_binom_rv_sij} can be considered as Bernoulli trial.   $\mathcal{S}_{ij}$ is utilized while evaluating the number of received molecules at Rx$i$ that originates from Tx$j$.

In this paper, we consider two types of interference sources for the receiving bulge: interference from the previous symbols of the  corresponding transmitter (i.e., ISI) and interference from the current and the previous symbols of the other link (i.e., ILI). The ISI at the $n^{\mathrm{th}}$ symbol can be modeled as a sum of interference due to the previous symbols. The ILI at the $n^{\mathrm{th}}$ symbol can be modeled as a sum of interference due to the other link emissions including the current symbol.  Hence, the interference model at Rx$i$ can be expressed as 
\begin{equation}
\label{interferences}
\begin{split}
{I}_{i}[n] &={ISI}_i[n]+{ILI}_i[n]\\
{ISI}_i[n] &=\sum\limits_{k=1}^{n-1}Q_{x_{i}[n-k]}{\mathcal{S}}_{ii}[k]\\
 {ILI}_1[n] &=\sum\limits_{k=0}^{n-1}Q_{x_{2}[n-k]}{\mathcal{S}}_{12}[k]\\
 {ILI}_2[n] &=\sum\limits_{k=0}^{n-1}Q_{x_{1}[n-k]}{\mathcal{S}}_{21}[k]\\
 {\mathcal{S}}_{ij}[k] &\triangleq{\mathcal{S}}_{ij}(kt_{s},(k\!+\!1)t_{s})
\end{split}
\end{equation}
where $I_{i}[n]$, $ISI_{i}[n]$, and $ILI_{i}[n]$ denote random variables of total interference, ISI, and ILI induced at Rx$i$ at the $n^\mathrm{th}$ time slot respectively. Note that the summation in the ILI term starts from zero, because the current symbol of the other link also induces interference.
The channel output at Rx$i$ for the $n^\mathrm{th}$ time slot can be written as 
\begin{equation}
\label{eq_channel_output}
y_{\mathrm{Rx}i}[n]=((Q_1\!-\!Q_0)x_i[n]\!+\!Q_0)\mathcal{S}_{ii}[0]\!+\!{I}_{i}[n]\!+\!{n}_{i}[n]
\end{equation}
where, for each time slot, $y_{\mathrm{Rx}i}$ is the random variable of the number of received molecules and ${n}_{i}$ denotes the molecular noise that can occur due to outer molecular invasion or decomposition at Rx$i$. The effect of noise is assumed to be a normal distribution $\mathcal{N}(\mu_{n},\sigma_{n}^{2})$. In this paper, the transmitter emits zero molecules to send bit-$0$ to reduce the energy consumption and separate the signal amplitudes as much as possible. Therefore, substituting $Q_0$ in \eqref{eq_channel_output} with zero and writing in matrix form for each symbol slot yields the following
\begin{align}
\label{eq_y_inMatrixForm}
\begin{split}
\begin{bmatrix} y_{\mathrm{Rx}1} \\ y_{\mathrm{Rx}2}\end{bmatrix} \!=\! \begin{bmatrix} Q_1\mathcal{S}_{11}[0] & 0\\ 0& Q_1\mathcal{S}_{22}[0]\end{bmatrix} \begin{bmatrix} x_1 \\ x_2\end{bmatrix} \!+\! \begin{bmatrix} {I}_{1} \\ {I}_{2}\end{bmatrix} \!+\! \begin{bmatrix} {n}_{1} \\ {n}_{2}\end{bmatrix}
\end{split}
\end{align}
for $2\times{2}$ MIMO system. We write \eqref{eq_y_inMatrixForm} in short as
\begin{equation}
\pmb{y} =\pmb{H} \pmb{x}+\pmb{I}+\pmb{n}.
\end{equation} 
The details of $\pmb{I}$ will be given in the following section.

\section{Fitting Channel Parameters and Proposed Detection Algorithms}
\label{fit_n_algorithm}
We need to have a formula for the first hitting probability in molecular MIMO setup to formulate ${F}_{ij}(t)$. Therefore, extensive simulations are carried out to understand the underlying formula and we use nonlinear curve fitting on the simulation data. After having the fitted cumulative distribution functions (CDFs) we carry out our analytical derivations via utilizing approximate ${F}_{ij}(t)$ functions. We also introduce the proposed detection algorithms and optimal thresholds in this section.

\subsection{Fitting Channel Parameters}
\label{fitting}
We use a model function that is coherent with  (\ref{eqn_first_hitting_3d}) to fit the simulation data (i.e., the formula is coherent with molecular SISO system in a 3-D environment with some control parameters). The model function structure for non-linear fitting is as follows:
\begin{equation}
F_{\text{model}}(t|\rrn,\, d,\, D) =  \frac{b_1\,\rrn}{d\!+\!\rrn}  \erfc\left( \frac{d}{{(4D)^{b_2} \,t^{b_3}}} \right)
\label{eqn_fpp_model_function}
\end{equation}
where $b_1$, $b_2$, and $b_3$ are controllable parameters. We run extensive simulations for different parameter sets and estimate mean CDF of the hitting molecules. The hitting molecules are separated according to where they are originated for finding ${F}_{11}$, ${F}_{12}$, ${F}_{21}$, and ${F}_{22}$. Due to the symmetry of the topology, ${F}_{11}$ and ${F}_{12}$ are very close to ${F}_{22}$ and ${F}_{21}$ in the simulation data.  
\begin{figure*}[t]
	\begin{center}
		\subfigure[Fitted values of $b_1$]
		{\includegraphics[width=0.32\textwidth,keepaspectratio]
			{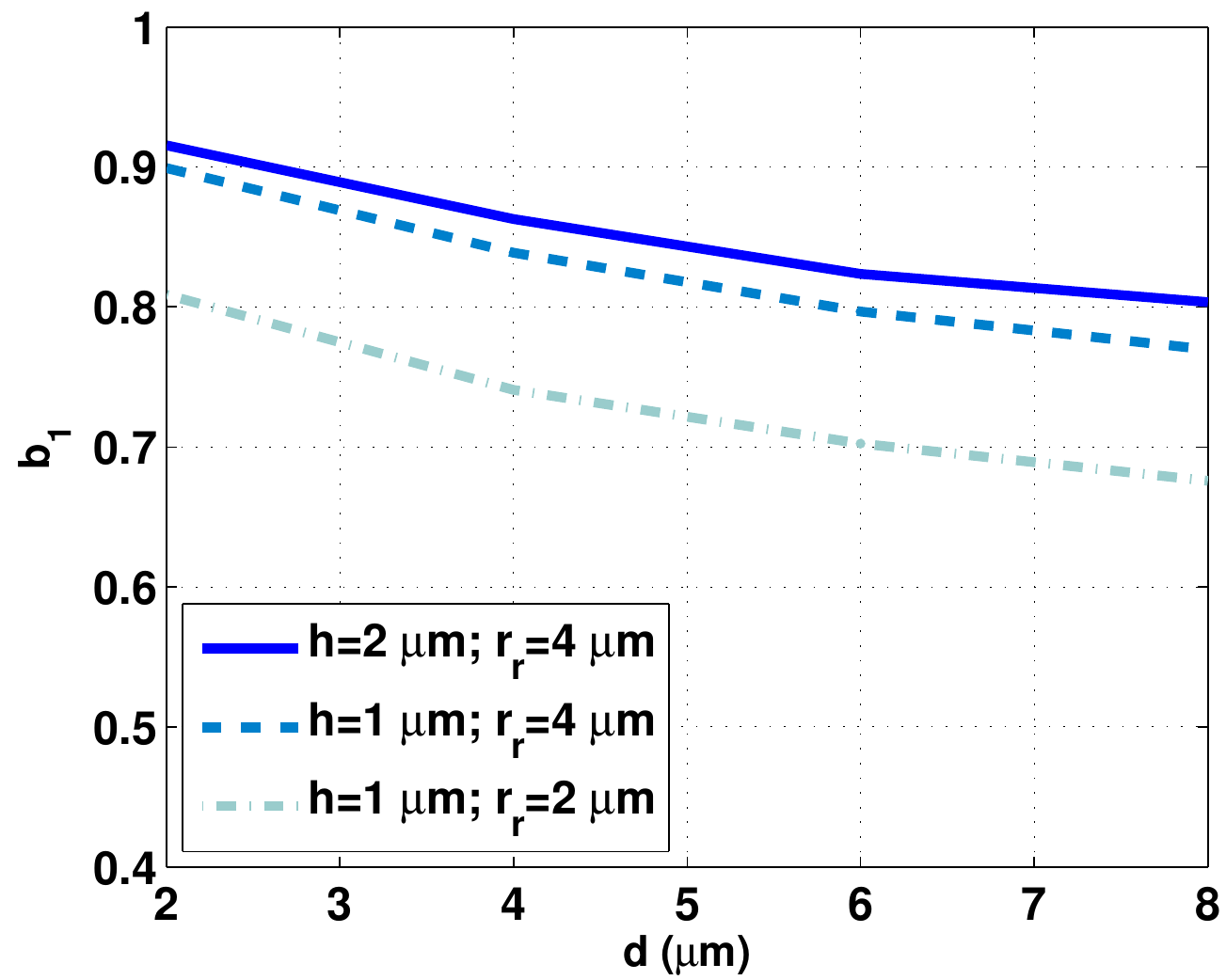}
			\label{subfig_fitting_b1} } 
		\subfigure[Fitted values of $b_2$]
		{\includegraphics[width=0.32\textwidth,keepaspectratio]
			{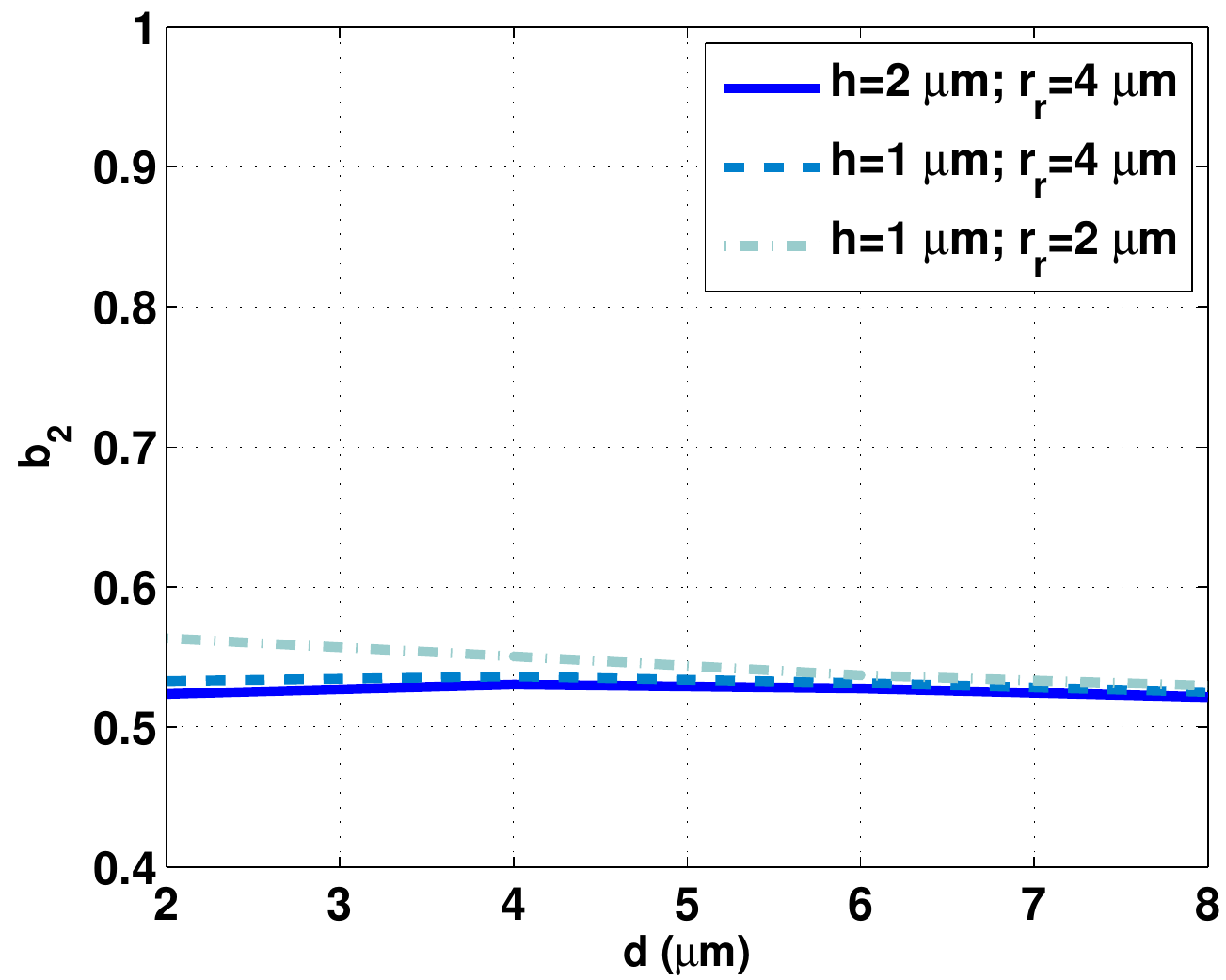}
			\label{subfig_fitting_b2} }
		\subfigure[Fitted values of $b_3$]
		{\includegraphics[width=0.32\textwidth,keepaspectratio]
			{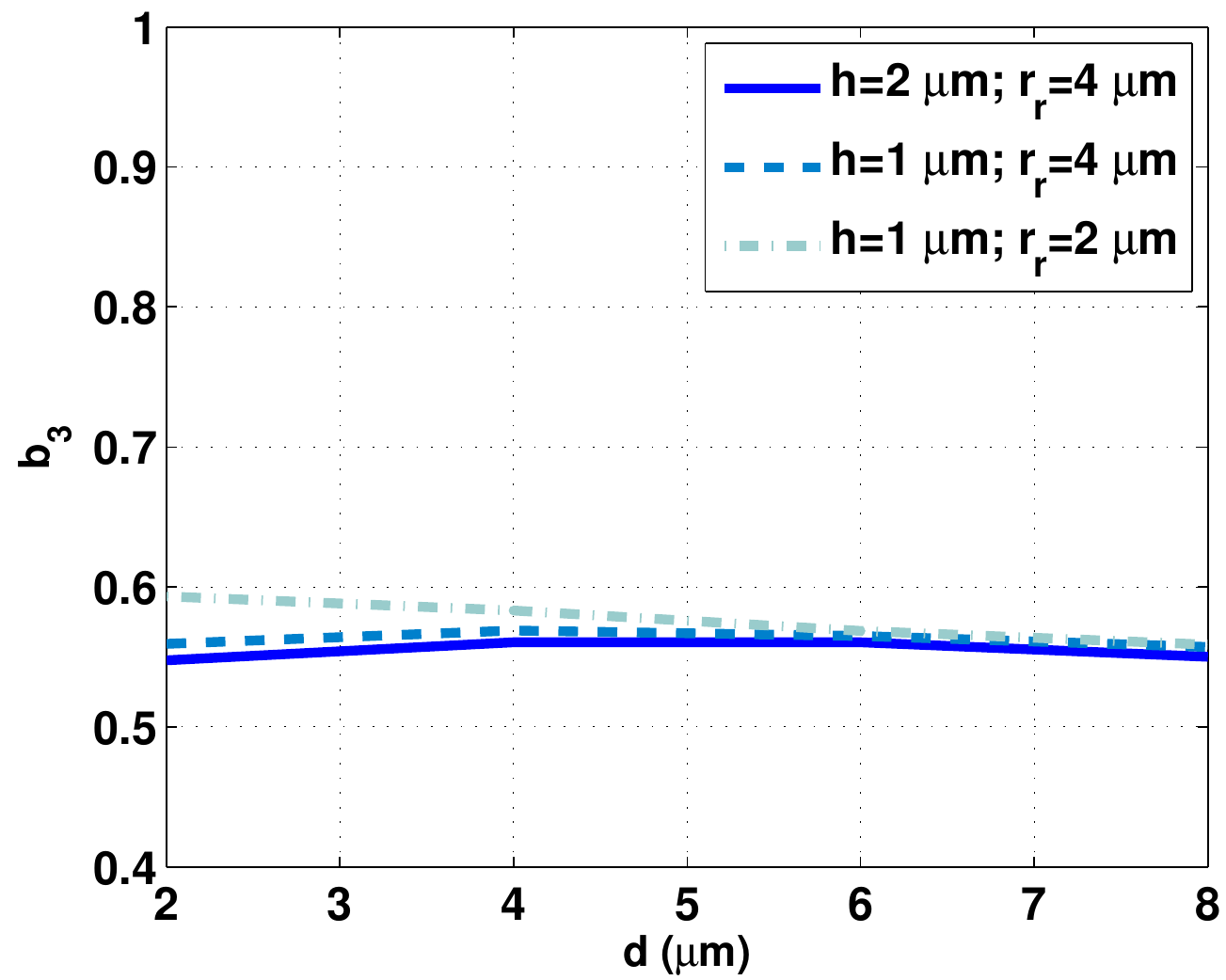}
			\label{subfig_fitting_b3} }
	\end{center}
	\caption{Fitted model parameters of ${F}_{11}(t)$ for different $h$ and $\rrn$ values ($D=50 \mu m^2/s$).}
	\label{fig_fitting_all}
\end{figure*}

NonLinearModel class in MATLAB is used for fitting nonlinear regression models. We implement the model function and utilize the simulation outputs to have a closed form CDF estimation obeying (\ref{eqn_fpp_model_function}). Fig.~\ref{fig_fitting_all} depicts the distance versus fitted model parameters for different $h$ and $\rrn$ values. The values of $b_2$ and $b_3$ change little when the distance is increased and, similar to the SISO case, they are close to 0.55.

\subsection{Detection Algorithms}
\label{algorithm}
\begin{figure}[t]
\centering
\includegraphics[width=0.99\columnwidth,keepaspectratio]
{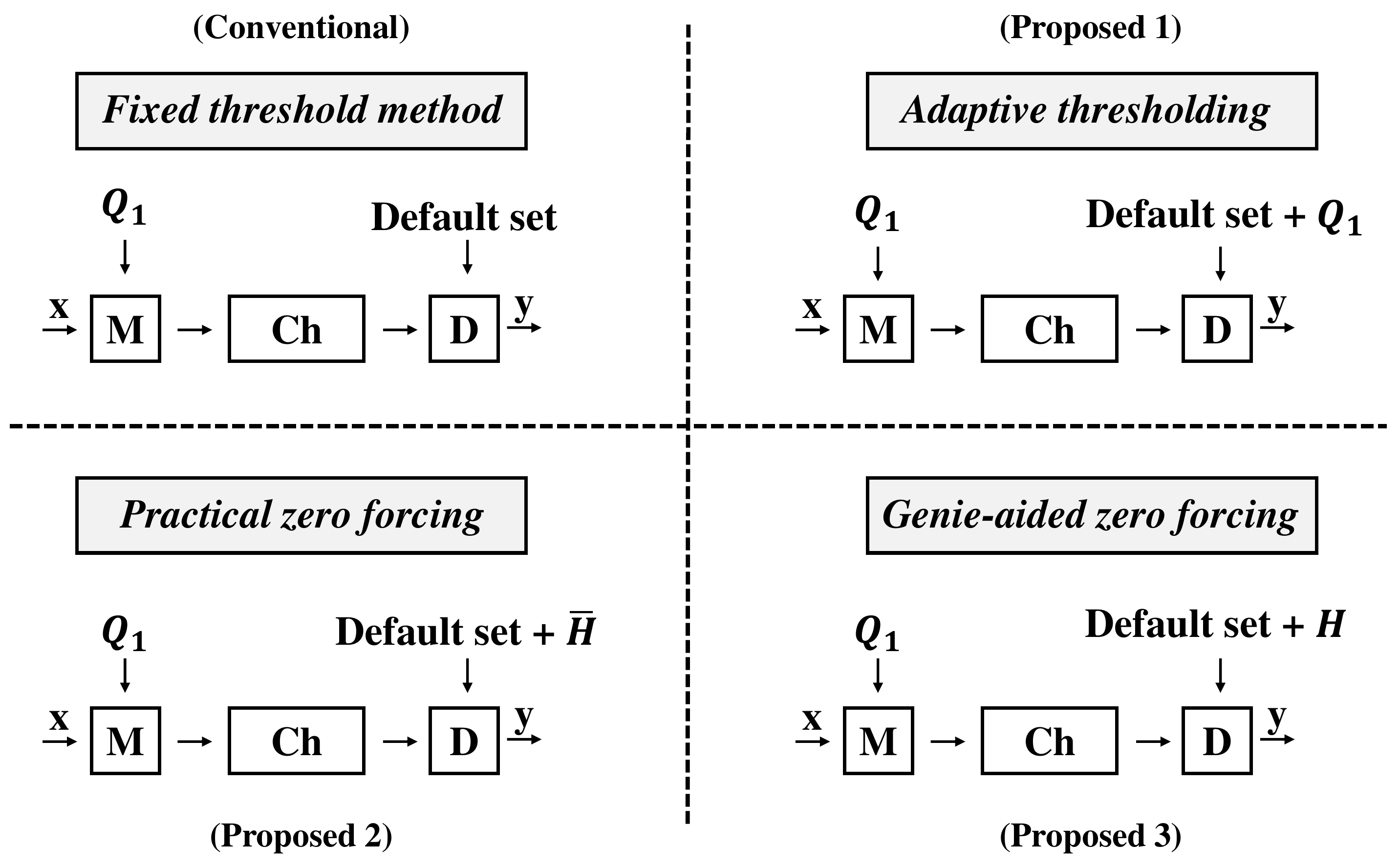}
\caption{Representation of detection algorithms and their requirements.}
\label{Fig:Algorithm}
\end{figure}
In this section, we introduce four detection algorithms. Each of them requires a different set of information given to the receiver. Fig.~\ref{Fig:Algorithm} describes the algorithms in terms of the information required. There is a default set that is needed commonly for all the algorithms. The default set consists of system parameters $D$, $t_s$, and topology parameters such as $d$, $h$, and $r_r$. First algorithm works with only the default set and we name it the \textit{fixed threshold method}. It uses a predetermined threshold and does not adapt to varying $t_s$ or $Q_1$. The second algorithm, called \textit{adaptive thresholding}, additionally requires $Q_1$, which is not a big assumption since $Q_1$ is determined with the communication protocol and the modulation. The output of the detector, $\hat{\pmb{y}}_{a}$, is formulated in ({\ref{eqn_adaptive}}) and the algorithm calculates the optimal decision threshold accordingly.
\begin{equation}
\label{eqn_adaptive}
\hat{\pmb{y}}_{a}=\frac{1}{Q_1}\pmb{y}=\frac{1}{Q_1}({\pmb{H}\pmb{x}}+{\pmb{I}}+\pmb{n})
\end{equation}
The third algorithm, called \textit{practical zero forcing} method, needs the average channel response matrix, which is denoted by $\bar{\pmb{H}}$. Inspired from the zero forcing of the conventional communication strategy, the formulation is given in ({\ref{eqn_practical}}). The output of the detector, $\hat{\pmb{y}}_{p}$, is formulated as
\begin{align}
\label{eqn_practical}
\hat{\pmb{y}}_{p}&={\bar{\pmb{H}}}^{-1}{\pmb{H}\pmb{x}}+{\bar{\pmb{H}}}^{-1}{\pmb{I}}+{\bar{\pmb{H}}}^{-1}\pmb{n}.
\end{align}
Lastly, the receiver is aided by Genie so that the exact channel states are known for every signal reception time. We call the algorithm for this case \textit{Genie-aided zero forcing}. The output of the detector, $\hat{\pmb{y}}_{g}$, is formulated in ({\ref{eqn_genie}}). It provides the best performance but is not feasible since the randomness of the molecular communication channel originates from the Brownian motion of molecules and is hard to acquire instantaneously.
\begin{align}
\label{eqn_genie}
\hat{\pmb{y}}_{g}&=\pmb{x}+{\pmb{H}}^{-1}{\pmb{I}}+{\pmb{H}}^{-1}\pmb{n}.
\end{align}

Our main contribution includes proving that the \textit{adaptive thresholding} and the \textit{practical zero forcing} methods perform exactly the same for the symmetrical MIMO topology. 

\begin{theorem}
	\label{theorem_adap&prac}
	When the centers of the transmitter and the receiver antennas form a rectangular grid, the detector outputs of the \textit{adaptive thresholding} and the \textit{practical zero forcing} methods satisfy $\,\hat{\pmb{y}}_{a}=A_{0}\,\hat{\pmb{y}}_{p}$ ($A_0$ denotes the hitting probabilities of a molecule to the intended bulge at the current symbol duration).
\end{theorem}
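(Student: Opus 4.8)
The plan is to reduce the claimed identity $\hat{\pmb{y}}_a = A_0\,\hat{\pmb{y}}_p$ to a single structural fact about the average channel matrix $\bar{\pmb{H}}$. Observe from \eqref{eqn_adaptive} and \eqref{eqn_practical} that both detector outputs are linear maps applied to the \emph{same} received vector $\pmb{y} = \pmb{H}\pmb{x} + \pmb{I} + \pmb{n}$: the adaptive detector applies the scalar $Q_1^{-1}$, while the practical zero-forcing detector applies $\bar{\pmb{H}}^{-1}$. Hence it suffices to establish the operator identity $Q_1^{-1}\,\pmb{y} = A_0\,\bar{\pmb{H}}^{-1}\,\pmb{y}$, which holds whenever $\bar{\pmb{H}} = Q_1 A_0\,\mathbf{I}$, with $\mathbf{I}$ the $2\times 2$ identity matrix. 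Once $\bar{\pmb{H}}$ is shown to be this scalar matrix, substituting $\bar{\pmb{H}}^{-1} = (Q_1 A_0)^{-1}\mathbf{I}$ into \eqref{eqn_practical} and factoring out $A_0^{-1}$ reproduces \eqref{eqn_adaptive} term by term, giving the result.

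The substantive step is therefore to evaluate $\bar{\pmb{H}}$. First I would read off from \eqref{eq_y_inMatrixForm} that $\pmb{H} = \mathrm{diag}\!\left(Q_1\mathcal{S}_{11}[0],\,Q_1\mathcal{S}_{22}[0]\right)$ is diagonal, so that passing to the average channel response replaces each random gain $\mathcal{S}_{ii}[0]$ by its mean. By the definition in \eqref{eq_binom_rv_sij}, this mean is the current-slot hitting probability to the intended bulge, $F_{ii}(0,t_s)$. Thus $\bar{\pmb{H}} = \mathrm{diag}\!\left(Q_1 F_{11}(0,t_s),\,Q_1 F_{22}(0,t_s)\right)$, and it is already diagonal because the off-diagonal entries of $\pmb{H}$ vanish identically in the model.

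The rectangular-grid hypothesis is what remains to be used. When the centers of Tx1, Tx2, Rx1, and Rx2 form a rectangle, the geometric configuration governing link~1 is congruent, under the reflection that swaps the two links, to that governing link~2; since the absorbing-boundary diffusion problem is invariant under this isometry, the two self-link first-hitting CDFs coincide, $F_{11}(0,t_s) = F_{22}(0,t_s) \triangleq A_0$. This is precisely the topological symmetry already observed in Section~\ref{fitting}, made exact here for the idealized grid. Consequently $\bar{\pmb{H}} = Q_1 A_0\,\mathbf{I}$ is a scalar multiple of the identity, its inverse is $(Q_1 A_0)^{-1}\mathbf{I}$, and the operator identity of the first paragraph follows by inspection.

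I expect the only genuine obstacle to be making the symmetry argument rigorous rather than pictorial: one must argue that the joint absorbing process for the two receiver spheres is invariant under the link-swap reflection, which is exactly what forces $F_{11}(0,t_s) = F_{22}(0,t_s)$; any asymmetry in the grid would break the equality of diagonal entries and the two detectors would differ. The rest — averaging the diagonal, inverting a scalar matrix, and matching coefficients across \eqref{eqn_adaptive} and \eqref{eqn_practical} — is routine scalar algebra and carries no difficulty.
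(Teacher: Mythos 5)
Your proof is correct and takes essentially the same route as the paper's: both arguments reduce the claim to showing that topological symmetry forces $F_{11}(0,t_s)=F_{22}(0,t_s)=A_0$, hence $\bar{\pmb{H}}=Q_{1}A_{0}\pmb{E}$ (with $\pmb{E}$ the $2\times2$ identity), and then invert this scalar matrix to match \eqref{eqn_adaptive} against \eqref{eqn_practical}. Your version merely spells out the reflection-symmetry justification and the Bernoulli-mean computation a bit more explicitly than the paper's one-line appeal to symmetry, and omits the paper's (inessential) normal approximation of the binomial channel gain.
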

\begin{proof}
	The topological symmetry guarantees the random variables $\mathcal{S}_{11}[k]$ and $\mathcal{S}_{22}[k]$ have equal statistical parameters for a positive integer $k$. They are both  binomial random values with success probability of $A_k$. The transmitter sends $Q_1$ molecules for transmitting a bit-1. Therefore, the diagonal entries of $\pmb{H}$ follow a binomial distribution and approximated to the normal distribution as follows:
	\begin{equation}
	\label{eqn_b2n}
	\pmb{H}(i,i) \sim \mathcal{B}(Q_{1},A_0) \approx \mathcal{N}\left(Q_{1}{A}_{0},Q_{1}A_{0}(1-A_{0})\right)
	\end{equation}
	and the channel mean $\bar{\pmb{H}}$ equals $Q_{1}A_{0}\pmb{E}$ where $\pmb{E}$ denotes a $2\times2$ identity matrix. It leads ${\bar{\pmb{H}}}^{-1}=(1/Q_{1}A_{0})\pmb{E}$ and $\hat{\pmb{y}}_{a}$ in ({\ref{eqn_adaptive}}) becomes a multiplication of $A_{0}$ and $\hat{\pmb{y}}_{p}$ in ({\ref{eqn_practical}}).
\end{proof}
Theorem~\ref{theorem_adap&prac} ensures that both methods on average perform the same, since the signal detection properties (i.e., the detector outputs) are similar up to a constant multiple.

\subsection{Interference Formulations and Optimal Thresholds}
\label{threshold}
In this section, we formulate the interference and find the optimal  decision rule for the \textit{adaptive thresholding} and \textit{practical zero forcing} methods analytically. To derive the optimal decision threshold, the study uses the maximum-a-posterior (MAP) method.

As the receiver is unaware of the parameter $Q_1$ for the case of \textit{fixed threshold method}, we have to predetermine a range of $Q_1$ to use and decide the decision threshold, $\eta_{f}$, to minimize the bit error rate (BER) for all $Q_1$. Note that the analysis of \textit{Genie-aided zero forcing} inherits difficulties of acquiring instantaneous $\pmb{H}$. Hence, the optimal thresholds for \textit{Genie-aided zero forcing} are found empirically.

We should consider the interference when determining the thresholds and consequently the received symbol. Topological symmetry ensures that all the properties of Rx1 and Rx2 coincide in terms of interference, so it is sufficient to analyze Rx1. The channel output for Rx1 at $n^\mathrm{th}$ time slot becomes $y_{\rxi{1}}[n]=Q_{1}\mathcal{S}_{11}[0]x_{1}[n]+{I}_{1}[n]+{n}_{1}[n]$ from \eqref{eq_channel_output}. 
We approximate ${I}_{1}[n]\!=\!{ISI}_{1}[n]\!+\!{ILI}_{1}[n]$, given in \eqref{interferences}, with a Gaussian distribution that has mean $\mu_{I}$ and variance $\sigma_{I}^{2}$. Lemma~\ref{lemma_interference} provides the formulations for estimating the mean and the variance of the interference.
\begin{lemma}
\label{lemma_interference}
The $k^{\mathrm{th}}$ ISI term $Q_{x_{1}[n-k]}\mathcal{S}_{11}[k]$ in the summation of ({\ref{interferences}}) has mean value of $\pi_{1}Q_{1}A_{k}$ and variance of ${\pi_{1}Q_{1}A_{k}(1-A_{k})+\pi_{0}\pi_{1}Q_{1}^{2}A_{k}^{2}}$.
\end{lemma}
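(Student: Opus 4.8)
The plan is to treat the $k^{\mathrm{th}}$ ISI term $T_k\bydef Q_{x_1[n-k]}\mathcal{S}_{11}[k]$ as a doubly stochastic quantity and apply the laws of total expectation and total variance, conditioning on the transmitted data bit $x_1[n-k]$. There are two independent layers of randomness here: the \emph{outer} layer is the random data bit, and the \emph{inner} layer is the Brownian hitting of the emitted molecules. Writing $\pi_1$ and $\pi_0=1-\pi_1$ for the prior probabilities of sending bit-$1$ and bit-$0$, I would first make the conditional laws precise. Conditioned on $x_1[n-k]=1$, the $Q_1$ released molecules each hit Rx1 during the $k^{\mathrm{th}}$ subsequent slot independently with probability $A_k=F_{11}(kt_s,(k+1)t_s)$, so by the Bernoulli modeling of $\mathcal{S}_{11}[k]$ in \eqref{eq_binom_rv_sij} the term is $\mathcal{B}(Q_1,A_k)$; conditioned on $x_1[n-k]=0$, the transmitter emits $Q_0=0$ molecules and the term is identically zero. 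This reduces the lemma to a standard two-component mixture computation.

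For the mean I would apply the law of total expectation. The conditional means are $Q_1A_k$ on the bit-$1$ branch and $0$ on the bit-$0$ branch, so averaging against the bit prior gives $\mathbb{E}[T_k]=\pi_1 Q_1 A_k$, which is the claimed value.

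For the variance the key tool is the law of total variance,
\begin{equation*}
\mathrm{Var}(T_k)=\mathbb{E}\!\left[\mathrm{Var}\!\left(T_k\mid x_1[n-k]\right)\right]+\mathrm{Var}\!\left(\mathbb{E}\!\left[T_k\mid x_1[n-k]\right]\right).
\end{equation*}
The first (within-branch) term uses the binomial conditional variance $Q_1A_k(1-A_k)$ on the bit-$1$ branch and $0$ on the bit-$0$ branch, averaging to $\pi_1 Q_1 A_k(1-A_k)$. The second (between-branch) term is the variance of the two-valued quantity $\mathbb{E}[T_k\mid x_1[n-k]]$, which equals $Q_1A_k$ with probability $\pi_1$ and $0$ with probability $\pi_0$; a direct computation gives $\pi_0\pi_1 Q_1^2 A_k^2$. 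Summing the two contributions yields exactly the stated variance.

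The main obstacle is conceptual rather than computational: one must resist the temptation to collapse the two layers of randomness into a single binomial. The extra term $\pi_0\pi_1 Q_1^2 A_k^2$ comes precisely from the variance of the data bit itself, i.e. the $\mathrm{Var}(\mathbb{E}[\,\cdot\mid x_1[n-k]])$ piece, and would be silently lost if the bit were treated as deterministic. The only modeling assumption to flag explicitly is the conditional independence of the hitting events across the $Q_1$ molecules, which is exactly what justifies the conditional binomial law and which I would anchor to the definition in \eqref{eq_binom_rv_sij}.
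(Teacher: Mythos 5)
Your proposal is correct and follows essentially the same route as the paper: both condition on the random data bit $x_1[n-k]$, treating the $k^{\mathrm{th}}$ ISI term as a two-point mixture that is $\mathcal{B}(Q_1,A_k)$ (Gaussian-approximated in the paper) with probability $\pi_1$ and identically zero with probability $\pi_0$. The only difference is bookkeeping: you organize the variance via the law of total variance, $\mathbb{E}[\mathrm{Var}(T_k\mid x_1[n-k])]+\mathrm{Var}(\mathbb{E}[T_k\mid x_1[n-k]])$, whereas the paper computes $E[x^2]-E[x]^2$ directly for the mixture --- algebraically equivalent paths to the same mean $\pi_1 Q_1 A_k$ and variance $\pi_1 Q_1 A_k(1-A_k)+\pi_0\pi_1 Q_1^2 A_k^2$.
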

\begin{proof}
With probability $\pi_1$ and $\pi_0$, $Q_{x_{1}[n-k]}\mathcal{S}_{11}[k]$ follows $\mathcal{N}(Q_{1}A_{k},Q_{1}A_{k}(1-A_{k}))$ and becomes just zero, respectively. Therefore, the mean of the received ISI becomes $\pi_{1}Q_{1}A_{k}$ and the variance becomes
\begin{eqnarray}
\nonumber
\sigma^{2}&=&E[{x}^{2}]-{E[x]}^{2}\\ \nonumber
&=&\pi_{1}(Q_{1}^{2}A_{k}^{2}+Q_{1}^{2}A_{k}^{2}(1-A_{k})^{2})-\pi_{1}^{2}Q_{1}^{2}A_{k}^{2}\\ \nonumber
&=&\pi_{1}Q_{1}^{2}A_{k}^{2}(1-A_{k})^{2}+(\pi_{1}-\pi_{1}^{2})Q_{1}^{2}A_{k}^{2}\\ \nonumber
&=&\pi_{1}Q_{1}A_{k}(1-A_{k})+\pi_{0}\pi_{1}Q_{1}^{2}A_{k}^{2}.
\end{eqnarray}
\end{proof}

In a similar way, we can apply the lemma to find the mean and variance of the $k^{\mathrm{th}}$ ILI and sum both to find the total mean and variance of ${I}_{1}[n]$. The total interference mean and variance at the $n^{th}$ symbol slot becomes
\begin{align}
\label{eqn_interference_all}
\begin{split}
\mu_{I}&=\pi_{1}Q_{1}\left(\sum\limits_{k=1}^{n-1}A_{k}+\sum\limits_{k=0}^{n-1}B_{k}\right)\\
\sigma_{I}^{2}&=\pi_{0}\pi_{1}Q_{1}^{2}\left(\sum\limits_{k=1}^{n-1}A_{k}^2+\sum\limits_{k=0}^{n-1}B_{k}^2\right)\\
&+\pi_{1}Q_{1}\left(\sum\limits_{k=1}^{n-1}A_{k}(1-A_{k})+\sum\limits_{k=0}^{n-1}B_{k}(1-B_{k})\right)
\end{split}
\end{align}
where $B_{k}$ denotes the success probability of both $\mathcal{S}_{12}[k]$ and $\mathcal{S}_{21}[k]$.
Note that \eqref{eqn_interference_all} does not require the previously sent bit sequences. It  requires only the index of the current symbol, as it evaluates the expected value over the cases. Hence, \eqref{eqn_interference_all} can be used for each symbol consecutively.

After formulating the interference and the detector output, we can now derive the thresholds for the \textit{practical zero forcing}. We denote the probability density function of $\hat{\pmb{y}}_{p}$ when the transmitted bit is $0$ and $1$ as $\hat{\pmb{y}}_{p|0}$ and $\hat{\pmb{y}}_{p|1}$, respectively. We approximate the detector outputs at Rx$i$ by Gaussian distribution as
\begin{align}
\label{eq_detector_output}
\begin{split}
\hat{\pmb{y}}_{p|0}(i) &\sim\mathcal{N}\left(\mu_0,\sigma_0^2\right)=\mathcal{N}\left(\frac{{\mu_{I}}}{Q_{1}A_{0}},\frac{{\sigma_{I}^2}+{\sigma_{n}^2}}{Q_{1}^{2}A_{0}^2}\right)\\
\hat{\pmb{y}}_{p|1}(i) &\sim\mathcal{N}\left(\mu_1,\sigma_1^2\right)=\mathcal{N}\left(1\!+\!\mu_0,\frac{(1-A_0)}{Q_{1}A_0}\!+\!\sigma_0^2\right)
\end{split}
\end{align}
where $\hat{\pmb{y}}_{p|0}(i)$ is $i^{\mathrm{th}}$ element of the $2\times{1}$ vector. Note that $\hat{\pmb{y}}_{p|1}$ is the sum of $\hat{\pmb{y}}_{p|0}$ and ${\bar{\pmb{H}}}^{-1}\pmb{H}$ that determines the mean and the variance of $\hat{\pmb{y}}_{p|1}(i)$ in \eqref{eq_detector_output}. The formulations are obtained by utilizing \eqref{eqn_practical} and \eqref{eqn_b2n}.

Now we define a decision rule as $\arg\max(\hat{\pmb{y}}_{p|i})$ and need to find intersection points of two distributions (i.e., to find the decision threshold, $\eta_p$, for  \textit{practical zero forcing}). It leads to the equality 
\begin{equation}
\label{eq_decision_equation}
\frac{1}{\sigma_{0}\!\sqrt{2\pi}}\mathrm{exp}\left(\!-\frac{(\eta_p\!-\!\mu_0)^2}{2\sigma_0^2}\!\right)\!=\!\frac{1}{\sigma_{1}\!\sqrt{2\pi}}\mathrm{exp}\left(\!-\frac{(\eta_p\!-\!\mu_1)^2}{2\sigma_1^2}\!\right)
\end{equation}
and its solution in terms of $\eta_p$ becomes
\begin{equation}
\eta_p=\mu_{0}+\frac{-1\pm\sqrt{1+(\beta-1)(1+\sigma_0^{2}\beta\mathrm{ln}\beta)}}{\beta-1} \nonumber
\end{equation}
for $\beta=\left({\sigma_1}/{\sigma_0}\right)^2>1$. We denote the bigger one as $\eta_p^+$ and the smaller one as $\eta_p^-$, then the decision rule for the decoded bits $\hat{\pmb{x}}$ becomes:
\begin{equation}
\hat{\pmb{x}} = \delta(\hat{\pmb{y}}_{p}) = 
\left\{ 
	\begin{array}{ll}
	0 & \eta_p^{+}>\hat{\pmb{y}_p}>\eta_p^{-} \\
	1 & \text{otherwise}
	\end{array}
\right. \nonumber
\end{equation}
where $\delta(.)$ is the decision function at the receiver.

Note that $\beta \geq 1$ because $\sigma_1^2$ is the sum of $\sigma_0^2$ and the variance of ${\bar{\pmb{H}}}^{-1}\pmb{H}$. The case where $\beta$ becomes $1$ means that $\hat{\pmb{y}}_{p|0}$ and $\hat{\pmb{y}}_{p|1}$ have the same variances and it is trivial that the threshold becomes $\frac{\mu_{0}+\mu_{1}}{2}$.

We can obtain similarly the decision rule for \textit{adaptive thresholding}. We define $\hat{\pmb{y}}_{a|0}$ and $\hat{\pmb{y}}_{a|1}$ as in \eqref{eq_detector_output} and their means and variances, by Theorem~\ref{theorem_adap&prac}, become $A_{0}\mu_0$, $(A_{0}\sigma_{0})^2$, $A_{0}\mu_1$, and $(A_{0}\sigma_{1})^2$, respectively. Therefore, we have an equation similar to \eqref{eq_decision_equation} for finding the decision threshold of \textit{adaptive thresholding} method
\begin{equation}
\frac{A_{0}\sigma_{1}}{A_{0}\sigma_{0}}\,\mathrm{exp}\left(-\frac{(\eta_w-A_{0}\mu_0)^2}{2(A_{0}\sigma_0)^2}\right)=\mathrm{exp}\left(-\frac{(\eta_w-A_{0}\mu_1)^2}{2(A_{0}\sigma_1)^2}\right) \nonumber
\end{equation}
that can be solved similarly. 

\section{Results}
\label{results}
The system parameters used in this paper are given in Table~\ref{tab_result_params}.
We first give the definition of signal-to-interference-ratio (SIR) metric in molecular MIMO system and analyze the effect of  distance, $r_r$, and $h$. Next, we use the BER as the performance metric and analyze the effect of $Q_1$ and $t_s$. 
\begin{table}[t]
	\caption{Range of Parameters Used in the Analysis}
	\label{tab_result_params}
	\centering
	\begin{tabular}{L{3cm} L{1cm} L{3cm}}
	\hline
		Parameter &  Variable &Values \\
	\hline
		Diffusion cefficient  	& $D$  &      $50\, \mu m^2/s$ \\
		Distance 				& $d$  &     $\{ 2, 4 \} \mu m$ \\
		Radius of the receiver 	& $r_r$&      $\{ 2, 4 \} \mu m$ \\
		Bulge separation 		& $h$  &      $\{ 1, 2 \} \mu m$ \\
		\# molecules for sending bit-1 	&$Q_1$ &    $\{100\sim600\}$ molecules \\
		Probability of sending bit-1 & $\pi_1$ & 0.5 \\
		Symbol duration 		&$t_s$ &    $\{ 0.05\sim1 \}$ sec \\
		Molecular noise variance& $\sigma_{n}^{2}$& 10 \\
		Bit sequence length 	& 	   &    $5\times10^{4}$\\
		Replication 			&      & 20 \\
	\hline
	\end{tabular}
\end{table}

\subsection{SIR Analysis}
\label{result_SIR}
SIR is defined as the ratio of the expected number of molecules coming from the intended transmitter in the intended time slot and mean ILI plus ISI for just a one-shot signal. 
\begin{equation}
\text{SIR}=\frac{F_{11}(0,t_s)}{F_{11}(t_{s},\infty)+F_{12}(0,\infty)}. \nonumber
\end{equation}
Note that this definition is specific to the molecular communication case and explains the clearness of the mean signal term in the received signal. 
\begin{figure}[t]
	\centering
	\includegraphics[width=0.98\columnwidth,keepaspectratio]
	{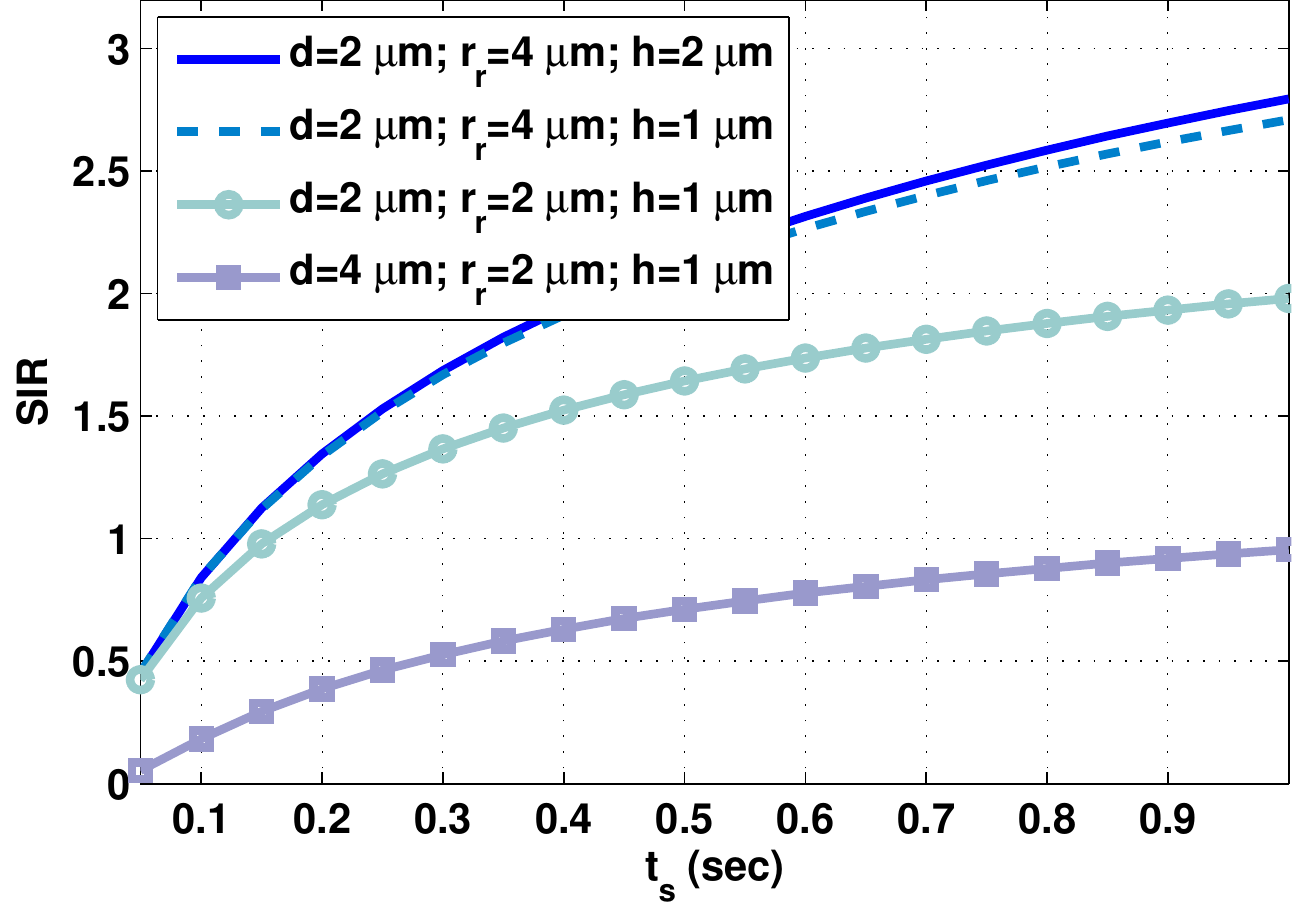}
	\caption{SIR plots of different topologies ($D=50 \,\mu m^2/s$)}
	\label{Fig:SIR}
\end{figure}

Fig.~\ref{Fig:SIR} depicts $t_s$ versus SIR for different topology parameters. With respect to SIR, the best enhancement is provided by reducing the distance between the transmitter and the receiver. Increasing the bulge size also gives merit, however, using more separation distance between Rx bulges results in only insignificant improvement. Thus, we can conclude that SIR is influenced more by the ISI term than the ILI term for our system setup.

We set the topological parameters $d=2{\mu}m$, $r_{r}=4{\mu}m$, $h=2{\mu}m$, and $D=50{\mu}{m}^{2}/s$ for the rest of the performance evaluation. The selected system parameters and the fitted values for model parameters are given in Table~\ref{tab_fitting_selected_params_all}. Utilizing fitted values enables us to estimate $F_{ij}(t)$ analytically.

\begin{table}[t]
	\caption{Fitted model parameters for the selected topology ($d=2\mu m$, $\rrn=4\mu m$, $h=2\mu m$, $D=50 \,\mu m^2/s$).}
	\label{tab_fitting_selected_params_all}
	\centering
	\begin{tabular}{l L{1.7cm} L{1.7cm} L{1.7cm}}
		\hline  
		Function                & $b_1$      & $b_2$        & $b_3$     \\
		\hline  
		$F_{11}(t)$          & 0.9155	     & 0.5236       & 0.5476	\\
		$F_{12}(t)$          & 0.1534	     & 0.2780       & 0.5363	\\
		\hline
	\end{tabular}
\end{table}

\subsection{BER Analysis}
\label{result_BER}
In this section, we analyze the BER with respect to varying $Q_1$ and $t_s$ and compare the performance gain of each of the proposed detection algorithms. Each Tx sends $5\times10^{4}$ bits with equal probability of sending bit-1 and bit-0. Most prior work has shown that, with an appropriate symbol duration, the current symbol is mostly affected by one previous symbol with the rest being negligible~\cite{kuran2010energyMF, kuran2012interferenceEO, kim2013novelMT}. Therefore, we consider four slots of interference in the simulations. We examined all the thresholds between $-1$ and $2$ with $10^{-3}$ interval and checked the optimal threshold for each $Q_1$. Empirically found fixed threshold, $\eta_f$, is selected as $0.2$. 

Fig.~\ref{Fig:BER_vs_Q} shows the BER performance of detection algorithms while $Q_1$ varies from 100 to 600. The first observation is that the \textit{adaptive thresholding} and the \textit{practical zero forcing} provide coinciding results, as in Theorem~\ref{theorem_adap&prac}. Increasing $Q_1$ (i.e., the signal power) decreases the BER for all the detection algorithms. However, the improvement of \textit{fixed thresholding} is significantly lower than the other methods. When the instantaneous $\pmb{H}$ is known, \textit{Genie-aided zero forcing} is applicable and gives the best performance. Obtaining that information, however is not easy. On the other hand, obtaining the optimal threshold by knowing $Q_1$ and $\pi_1$ is feasible and leads to a performance that is close to \textit{Genie-aided zero forcing}.
\begin{figure}
	\centering
	\includegraphics[width=0.95\columnwidth,keepaspectratio]
	{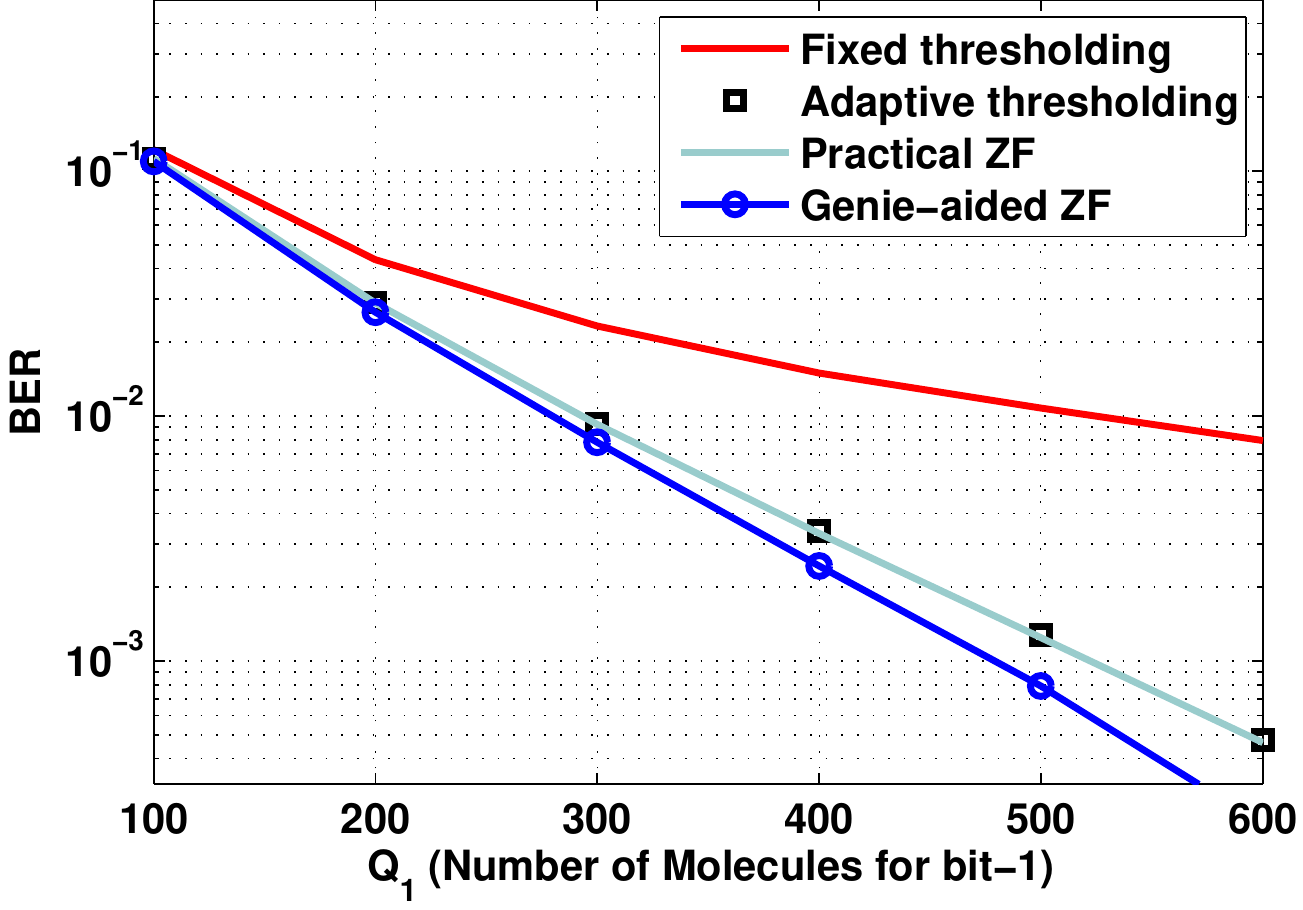}
	\caption{BER performance of detection algorithms ($t_s = 80\,ms$).}
	\label{Fig:BER_vs_Q}
\end{figure}

Fig.~\ref{Fig:BER_vs_ts} illustrates the BER performance against $t_s$ from $50\,ms$ to $130\,ms$. It shows that increasing $t_s$ gives faster improvement in terms of BER relative to increasing $Q_1$. This means that to achieve a lower BER it is more effective to decrease the information rate than to increase the transmit power.

\section{Conclusions}
\label{conclusion}
Data rates in molecular communications are acutely affected by interference. Therefore, any enhancement for molecular communications should consider interference effects precisely. In this paper, we proposed a MIMO system for the MCvD that takes into account inter symbol and inter link interference. Moreover, we proposed three symbol detection algorithms that depend on the information set that the receiver has. First, we modeled the channel's finite impulse response via fitting 3-D MIMO simulator results considering the fraction of the received molecules. We utilized the estimated function (that gives the fraction of received molecules) to determine interference and the optimal thresholds for the proposed methods. In the performance analysis, we first analyzed the effect of varying topological conditions on the SIR. The result shows that the transmitter-receiver distance and the size of receiver bulges (antennas) are more effective at reducing the interference rather than the separation of bulges. We investigated the performance of the proposed detection algorithms in terms of BER while varying $Q_1$ and $t_s$. We quantified the enhancement of these parameters in the molecular MIMO system. As a future direction, we will focus on a testbed implementation of the proposed algorithms while incorporating the drift.


\section*{Acknowledgment}
This research was in part supported by the MSIP (Ministry of Science, ICT \& Future Planning), Korea, under the ``IT Consilience Creative Program" (NIPA-2014-H0201-14-1002) supervised by the NIPA (National IT Industry Promotion Agency) and by the Basic Science Research Program (2014R1A1A1002186) funded by the Ministry of Science, ICT and Future Planning (MSIP), Korea, through the National Research Foundation of Korea.

\begin{figure}[t]
	\centering
	\includegraphics[width=0.95\columnwidth,keepaspectratio]
	{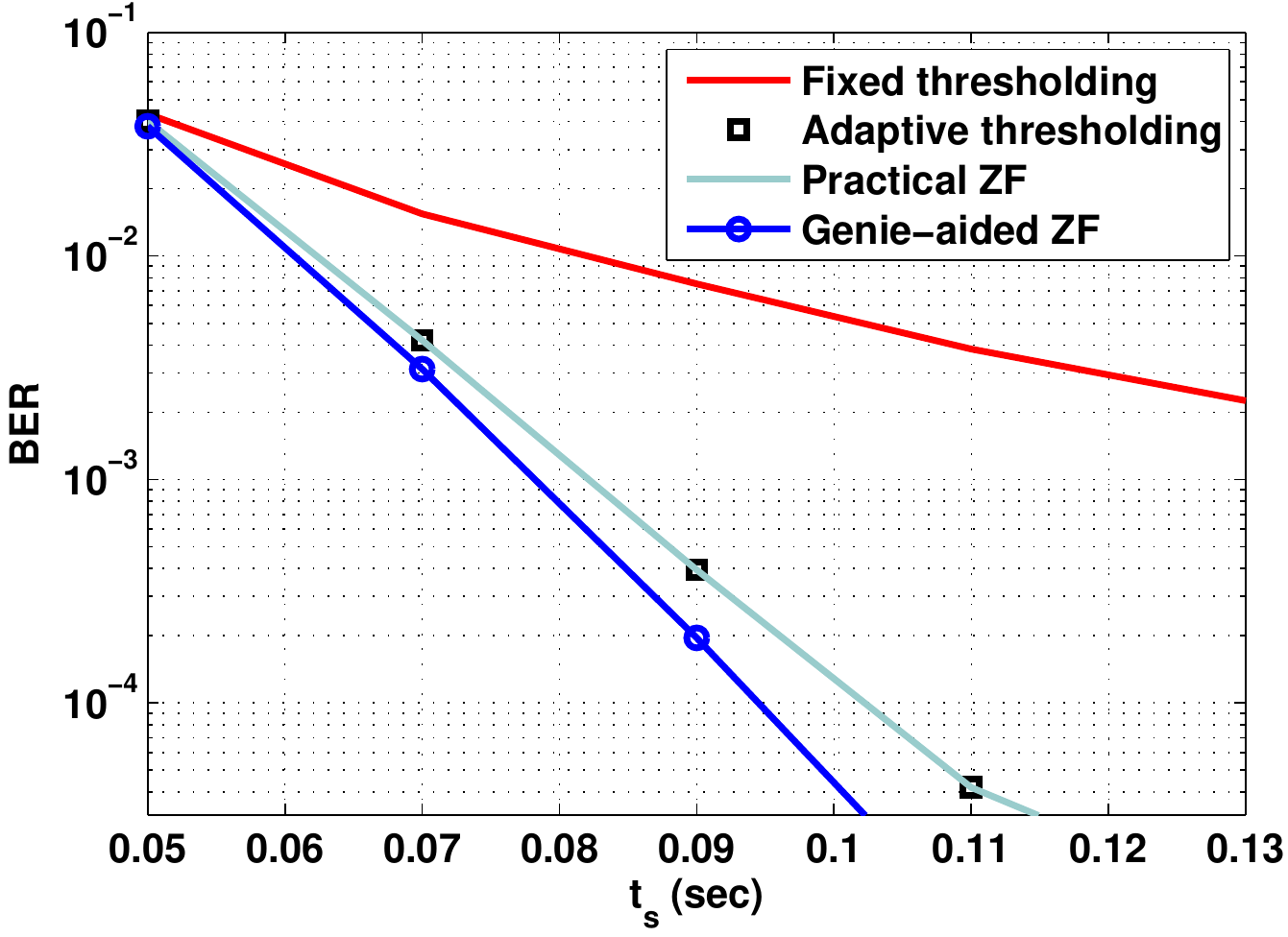}
	\caption{BER performance of detection algorithms ($Q_{1}=500$).}
	\label{Fig:BER_vs_ts}
\end{figure}

\bibliographystyle{IEEEtran}
\bibliography{references_molcom}

\end{document}